\documentclass[12pt,a4paper]{article}

\usepackage{flonat-wp}
\usepackage[harvard]{flonat-bib}

\newcommand{\new}[1]{#1}
\newcommand{\removed}[1]{}

\newcommand{\spadBredMin}{5}
\newcommand{\spadBredMax}{14}

\newcommand{\spadBredNuwMin}{20}
\newcommand{\spadBredNuwMax}{44}

\newcommand{\spadVsWpNuwMin}{13}
\newcommand{\spadVsWpNuwMax}{20}

\title{Differentially Private Auditing Under Strategic Response}
\author{Florian A. D. Burnat\thanks{University of Bath, \texttt{fadb20@bath.ac.uk}.} \and Brittany I. Davidson\thanks{University of Bath, \texttt{bid23@bath.ac.uk}.}}
\date{29 July 2026}

\begin{document}

\begin{titlepage}

    \maketitle

    \begin{abstract}
        \noindent Regulatory audits of \new{artificial intelligence (AI)}\removed{AI} systems may use differential privacy (DP) to protect individuals represented in audit data. We study audit design when a fixed protocol releases privacy-protected aggregate harm statistics and the audited developer strategically responds to the resulting observation channel. We formalize a bilevel Stackelberg game in which an auditor commits to follow-up coverage and DP budget allocation across statistics, and a strategic developer reallocates mitigation in response. We introduce the \emph{welfare-weighted observability gap} $B_w$, an objective that weights residual harm by normalized shortfall in the audit's observation score. An exact comparison identity shows that the effect of strategic response is profile-dependent; in a symmetric two-statistic benchmark, we prove that heterogeneous observability strictly worsens $B_w$ when the less observable statistic receives at least as much shortfall weight. We show by a heterogeneous counterexample that harm-proportional allocation is not universally optimal, derive the auditor's full KKT-coupled hypergradient, and provide a single-level reformulation via the developer's \new{Karush--Kuhn--Tucker (}KKT\new{)} system. We propose Strategic Private Audit Design (SPAD), a projected-hypergradient heuristic for local improvement. We specify a synthetic evaluation protocol across aggregate harm statistics, developer behaviors, and privacy budgets under quadratic mitigation costs and exponential residual-harm reduction; preliminary experiments show that SPAD reduces $B_w$ by \spadBredMin--\spadBredMax\% under uniform welfare and \spadBredNuwMin--\spadBredNuwMax\% under non-uniform welfare across the tested privacy-budget regimes, beating welfare-proportional allocation by \spadVsWpNuwMin--\spadVsWpNuwMax\% in the non-uniform-welfare case.

        \bigskip
        \noindent\textbf{Keywords:} differential privacy, auditing, strategic response, Stackelberg games, mechanism design
    \end{abstract}

    \setcounter{page}{0}
    \thispagestyle{empty}

\end{titlepage}

\clearpage

\doublespacing

\section{Introduction}
\label{sec:intro}

\removed{Differential privacy is standard for protecting individuals during audits of machine learning systems.}\new{Privacy protection is an active constraint in the emerging artificial-intelligence (AI) audit-access debate: what auditors may see is negotiated against trade secrets and the privacy of the individuals whose data trained the system \citep{Casper2024-em, Cen2024-ee, Burnat2026-xq}. Deployed differentially private publication systems have also set a precedent for committing to global privacy budgets in high-stakes statistical releases \citep{Abowd2022-mf}. These are motivation and an operational analogue, not evidence that DP-protected AI harm audits are already deployed. Our contribution is conditional and prescriptive: \emph{if} an audit uses a privacy-constrained statistical interface, the allocation of its privacy budget becomes a strategic design problem.} Allocating follow-up capacity and privacy accuracy across aggregate harm statistics creates \emph{uneven observability}: some subgroup metrics or failure categories receive less informative evidence. This asymmetry protects the individuals represented in the audit data but creates strategic incentives for the audited system.

Consider a strategic developer operating the audited system while a fixed audit service releases noisy aggregate statistics, such as subgroup false-positive-rate excesses or safety-failure rates for named test categories. We use \emph{developer} for this strategic agent, whether an AI platform, model provider, or in-house team. If the developer anticipates that some statistics will receive less follow-up or privacy accuracy, it has an incentive to reallocate mitigation away from those less observable categories and concentrate residual harm there. The privacy mechanism, meant to protect individuals in the audit data, can therefore create an opportunity to shift failures toward dimensions carrying less audit exposure. We ask: if the auditor anticipates this response, how should it allocate follow-up capacity and privacy budgets to reduce the observability gap?

Strategic responses to evaluations have been studied in classifier games and strategic classification \citep{Hardt2015-ci, Dong2018-ht}, where agents manipulate features after learning how they will be evaluated. Audit games model bilevel competition between an auditor and an adversary that reallocates defensive resources \citep{Blocki2015-du}. Empirical privacy auditing, meanwhile, tests whether trained models satisfy claimed differential privacy (DP) guarantees \citep{Tramer2022-zi, Nasr2023-if}. Our setting combines three different elements: (a) DP composition across aggregate harm statistics; (b) the auditor's commitment to fixed follow-up and privacy allocations that the developer observes; and (c) the developer's best-response reallocation of mitigation effort. \removed{Privacy-auditing papers design audits without modeling the strategic response to the audit interface itself.}\new{The DP-auditing literature performs a verification task orthogonal to ours; work on privacy-protected audit and disclosure interfaces, in turn, does not model the audited party's strategic response to the observation channel.} Strategic-classification papers assume fully observed features, while audit-game papers omit DP constraints. The intersection of privacy-constrained aggregate observation, auditor commitment, and strategic mitigation is the setting studied here.

We formalize this gap as a Stackelberg game between an auditor and a developer. The fixed workload releases all $d$ aggregate statistics. The auditor commits to a follow-up-capacity share $\pi$ and a privacy-budget allocation $\varepsilon=(\varepsilon_1,\ldots,\varepsilon_d)$, subject to a global privacy-loss envelope and basic composition. A strategically rational developer anticipates this commitment and reallocates mitigation to minimize an audit-exposure score plus mitigation cost, thereby concentrating residual harm where the effective observability score $\eta_j:=\pi_j\alpha_j(\varepsilon_j)$ is lowest. Here $h_j$ is baseline harm, $w_j$ its welfare weight, $g_j(h_j,m_j)$ residual harm after mitigation $m_j$, and $\alpha_j(\varepsilon_j)$ normalized informativeness above test size. We introduce the \emph{welfare-weighted observability gap},
\[
B_w(\pi,\varepsilon) = \sum_j w_j\, \bigl(1 - \pi_j\alpha_j(\varepsilon_j)\bigr)\, g_j(h_j, m^*_j(\pi,\varepsilon)).
\]
This objective weights true residual harm by normalized observability shortfall. Its non-negativity follows algebraically from non-negative welfare weights and residual harms and from $0\le\pi_j\alpha_j\le1$, not from DP itself. Because $\pi_j\alpha_j$ is a reduced-form score rather than a probability, $B_w$ is not literally expected undetected harm. Strategic reallocation may increase or decrease it relative to a non-strategic baseline; we derive an exact comparison identity and prove strict worsening in a symmetric two-statistic benchmark.

Our contributions are threefold. \emph{First}, we formalize privacy-constrained auditing as a Stackelberg game over per-statistic follow-up and privacy-budget allocation against a strategic developer, and define the welfare-weighted observability gap $B_w$ as the central audit-design objective. \emph{Second}, we show that strategic worsening is profile-dependent, prove it under transparent sufficient conditions in a symmetric two-statistic benchmark, and give a heterogeneous counterexample showing that harm-proportional allocation is not universally optimal. \emph{Third}, we derive a single-level reformulation and the full KKT-coupled hypergradient, then propose Strategic Private Audit Design (SPAD) as a projected-hypergradient heuristic for local improvement; we specify a synthetic evaluation protocol and report a preliminary ablation in \S\ref{sec:ablation}, with broader empirical validation deferred to a companion paper.

\S\ref{sec:model}--\ref{sec:experiments} formalize the model, prove the strategic blind-spot result, present SPAD, and report a preliminary ablation; \S\ref{sec:related} positions the contribution; and \S\ref{sec:limitations}--\ref{sec:conclusion} state the model's limits and conclusions.

%====================================================================

\section{Model}
\label{sec:model}

We formalize privacy-constrained auditing as a bilevel game in which an auditor allocates follow-up capacity and privacy budgets across aggregate harm statistics, anticipating a strategic developer who reallocates mitigation efforts in response to the resulting audit interface.

\subsection{Setup: harm space and welfare}
\label{sec:setup}

We model a system that can cause harm across $d$ dimensions. Let $j \in \{1, \ldots, d\}$ index these dimensions, which are aggregate statistics or query groups---for example, subgroup--metric pairs, robustness scenarios, or named safety-failure categories---rather than coordinates of an individual feature vector. Let $h \in \mathbb{R}_+^d$ denote the baseline population-harm vector, where $h_j$ is the latent harm in dimension $j$ in the absence of any mitigation. Let $w \in \mathbb{R}_+^d$ be a welfare weight vector, where $w_j$ reflects the social importance of reducing harm in dimension $j$.

\new{\emph{Running example (group-fairness audit).} A regulator audits a deployed screening model using a protected dataset $D$ containing predictions, outcomes, and protected attributes. Dimension $j$ is a subgroup--metric pair, and $q_j(D)$ is a clipped aggregate such as that subgroup's false-positive-rate excess. The baseline population harm is $h_j$; engineering effort $m_j$ reduces it to $g_j(h_j,m_j)$; and the fixed audit service releases a differentially private (DP) estimate $y_j=M_{j,\varepsilon_j}(D)$ of $q_j(D)$. DP protects the individuals represented in $D$. The auditor allocates follow-up capacity and privacy accuracy across these statistics, while the developer observes that allocation and shifts mitigation toward statistics with greater audit exposure. The 2020 US Census TopDown system is an operational analogue---not an AI audit---because it allocates a fixed global privacy-loss budget across geographic levels and measurement groups \citep{Abowd2022-mf}. Our question is how such an allocation should change when the audited party reacts to it.}

The auditor's welfare-relevant target is $\sum_{j=1}^d w_jg_j(h_j,m_j)$, the sum of welfare-weighted residual harms after mitigation. Vector $w$ represents the priorities of an external auditor or regulator, not the developer's objective. We treat the auditor as a \emph{normative} welfare planner: the framework asks how a regulator that cares about welfare-weighted harm should design audits, not how any particular regulator currently behaves; the latter is a positive question left to follow-up work and discussed as a misuse pathway in Appendix~\ref{sec:broader-impacts}.

\subsection{Auditor: follow-up and privacy-budget allocation}
\label{sec:auditor-action}

\removed{The auditor's action consists of two choices: a query policy $\pi \in \Delta^d$ specifying the distribution of audit queries over the $d$ harm dimensions ($\pi_j$ is the probability of querying dimension $j$), and a privacy budget allocation $\varepsilon = (\varepsilon_1, \ldots, \varepsilon_d) \in \mathbb{R}_+^d$, where $\varepsilon_j$ is the DP budget for queries to dimension $j$.}\new{The fixed workload releases every aggregate statistic. The auditor chooses a follow-up allocation $\pi\in\Delta^d$, where $\pi_j$ is the normalized share of finite regulatory investigation or enforcement capacity committed to evidence from statistic $j$ after release, and a privacy-budget allocation $\varepsilon=(\varepsilon_1,\ldots,\varepsilon_d)\in\mathbb{R}_+^d$, where $\varepsilon_j$ is the cumulative privacy budget allocated to statistic $j$'s releases. Thus $\pi_j$ is neither a query probability nor an additional privacy expenditure.} These choices are subject to
\[
\sum_{j=1}^d \varepsilon_j \le \varepsilon_{\mathrm{tot}}, \quad \pi \in \Delta^d, \quad \varepsilon_j \ge 0.
\]
\removed{For each query to dimension $j$, the audit interface returns a noisy observation $y_j = M_j(h^{\mathrm{res}}_j; \varepsilon_j)$, where $M_j$ is a differentially private mechanism, and whose accuracy improves with $\varepsilon_j$.}\new{Let $D$ be the protected audit dataset and $q_j(D)$ the aggregate harm statistic for dimension $j$. The latent quantity being estimated is the population residual harm $g_j(h_j,m_j)$, while the realized release is $y_j=M_{j,\varepsilon_j}(D)$, a noisy estimate of $q_j(D)$ produced by the protocol-fixed mechanism family. The global envelope $\varepsilon_{\mathrm{tot}}$ is exogenous to the allocation problem: it is fixed by the data controller, regulator, or audit agreement, and the auditor distributes it across the workload. Endogenizing this envelope would add a privacy--detection welfare problem and is outside the present game.}

\new{\begin{definition}[Differential privacy of the audit interface]\label{def:dp}
For adjacent datasets $D\sim D'$ differing in one protected audit record, every measurable output set $S$, and each statistic $j$, the protocol-fixed release satisfies
\[
\Pr[M_{j,\varepsilon_j}(D)\in S] \le e^{\varepsilon_j}\Pr[M_{j,\varepsilon_j}(D')\in S]+\delta^{\mathrm{priv}}_j.
\]
The paper optimizes $\varepsilon_j$ while $\delta^{\mathrm{priv}}_j$ is fixed, and $\delta^{\mathrm{priv}}_j=0$ for the pure-DP instantiations. Under our basic-composition abstraction for potentially overlapping statistics, $\sum_j\varepsilon_j\le\varepsilon_{\mathrm{tot}}$, with the $\delta^{\mathrm{priv}}_j$ values composing separately under approximate DP. Parallel composition for genuinely disjoint groups and privacy amplification from subsampling yield different feasible sets and are outside the present model. Appendix~\ref{app:dp-instantiations} instantiates the calibrated Gaussian, Laplace, and randomized-response mechanisms.
\end{definition}}

\new{\begin{remark}[The mechanism is protocol-mandated, not developer-chosen]\label{rem:mechanism-ownership}
For each dimension $j$, the regulator or a trusted audit service fixes and executes---or verifiably enforces---a calibrated family $\{M_{j,\varepsilon}\}_{\varepsilon\ge0}$. Choosing $\varepsilon_j$ selects the protocol-specified member and its associated privacy guarantee and normalized power curve. The developer observes the design but can choose only mitigation $m$; it cannot replace, post-process, or add noise to the audit channel. If the developer controls an unverifiable release mechanism, or if the feasible set contains every mechanism satisfying only an upper-bound DP guarantee, then a constant or deliberately under-calibrated mechanism is admissible and the objection applies. That expanded threat model is not covered by SPAD.
\end{remark}}

\new{The modeled mechanism class is deliberately narrow: calibrated privacy-protected releases of scalar aggregate harm statistics followed by a detection rule. Assumption~\ref{ass:detectability} is not a property of arbitrary DP algorithms; DP stochastic gradient descent and DP $k$-means produce trained models or clusterings rather than answer this audit-statistic workload and therefore lie outside the model. Within the modeled class, we abstract power above test size via a normalized informativeness function.}

\begin{assumption}
\label{ass:detectability}
For each dimension $j$ there exists a normalized informativeness function $\alpha_j : \mathbb{R}_+ \to [0,1]$ that is strictly increasing and continuously differentiable, with $\alpha_j(0) = 0$ and $\lim_{\varepsilon \to \infty} \alpha_j(\varepsilon) = 1$. If the fixed release is followed by a level-$a$ test with power $\widetilde\alpha_j$, then $\alpha_j=(\widetilde\alpha_j-a)/(1-a)$ at the reference operating point. Hence $\alpha_j$ measures power above size on a normalized scale; it is not itself the probability that the audit detects harm.
\end{assumption}

\new{The realized $y_j$ is a random mechanism output; $\alpha_j$ integrates over that noise and therefore is not a function of a particular realized $y_j$. Canonical mechanisms induce the fuller power function $\widetilde\alpha_j(\varepsilon_j,g_j(h_j,m_j))$. We freeze its second argument at a reference operating point $g_j(h_j,m_j^0)$ and normalize power above size to obtain the tractable reduced form $\alpha_j(\varepsilon_j)$. Appendix~\ref{app:dp-instantiations} derives this construction for Gaussian, Laplace, and randomized-response mechanisms. Suppressing the residual-harm argument is a modeling approximation rather than an identity and limits the external interpretation of the score.}

\subsection{Developer: mitigation under a cost budget}
\label{sec:developer-action}

\new{The baseline harm vector $h$ is a state of the system, not a developer action. The developer knows its own baseline profile and mitigation technology and chooses only a mitigation allocation $m\in\mathbb{R}_+^d$, where $m_j$ is engineering effort directed at reducing harm in dimension $j$, subject to a total cost budget $B$:}
\[
m_j \ge 0 \text{ for all } j, \qquad C(m) = \sum_{j=1}^d c_j(m_j) \le B,
\]
with $c_j : \mathbb{R}_+ \to \mathbb{R}_+$ the cost function for mitigation in dimension $j$. Residual harm is determined jointly by the state and the action as $g_j(h_j,m_j)$, with $g_j$ decreasing in its second argument and $g_j(h_j,0)=h_j$. We use $g_j(h_j,m_j)$ in equations and formal statements and reserve ``residual harm'' for prose. Standard examples are $g_j(h_j,m_j)=h_j\exp(-\beta_jm_j)$ and $g_j(h_j,m_j)=\max\{h_j-\beta_jm_j,0\}$.

\begin{assumption}
\label{ass:cost-mitigation}
Each cost function $c_j$ is strictly increasing, strictly convex, and coercive ($c_j(m_j) \to \infty$ as $m_j \to \infty$) with $c_j(0) = 0$. The residual harm function $g_j(h_j,\cdot)$ is strictly convex in $m_j$ for each fixed $h_j > 0$.
\end{assumption}

The coercivity of $c_j$ rules out cost functions with bounded growth (it holds for power-law $c_j(m_j) = m_j^p / p$ with $p > 1$ and is needed for compact sublevel sets in Lemma~\ref{lem:uniqueness}). The strict convexity of $g_j(h_j, \cdot)$ implies a diminishing marginal effectiveness of the mitigation effort, and the strict convexity of $c_j$ captures an increasing marginal cost.

\subsection{Developer's objective: minimize audit exposure plus cost}
\label{sec:developer-objective}

The central asymmetry is that the developer does not minimize welfare-weighted residual harm; it minimizes an audit-exposure score plus cost. The exposure contribution in dimension $j$ is
$\mathrm{DetHarm}_j(\pi, \varepsilon, m) = \pi_j \alpha_j(\varepsilon_j) g_j(h_j,m_j)$.
The developer solves
\begin{equation}
\label{eq:dev-problem}
m^*(\pi, \varepsilon) \in \arg\min_{m \in \mathbb{R}_+^d : C(m) \le B} \sum_{j=1}^d \pi_j \alpha_j(\varepsilon_j) g_j(h_j, m_j) + C(m).
\end{equation}
This best response minimizes the developer's normalized audit-exposure score plus mitigation cost, not welfare-weighted true residual harm. This misalignment can generate strategic blind spots.

\subsection{Auditor's objective: bilevel optimization}
\label{sec:auditor-objective}

The auditor anticipates the developer's response and solves the bilevel problem
\begin{equation}
\label{eq:auditor-problem}
\min_{\pi \in \Delta^d, \, \varepsilon \ge 0 : \sum_j \varepsilon_j \le \varepsilon_{\mathrm{tot}}} B_w(\pi, \varepsilon),
\quad B_w := \sum_j w_j (1 - \pi_j \alpha_j(\varepsilon_j))\, g_j(h_j, m^*(\pi, \varepsilon)_j),
\end{equation}
the welfare-weighted observability gap (Definition~\ref{def:blind-spot}). We retain $\mathrm{TRH}(\pi, \varepsilon) := \sum_j w_j g_j(h_j, m^*_j)$ as a diagnostic; $B_w \le \mathrm{TRH}$ pointwise. The Stackelberg structure parallels strategic classification \citep{Hardt2015-ci, Dong2018-ht} and security games \citep{Blocki2015-du, Korzhyk2011-qp, Schlenker2017-le}.

\subsection{Audit exposure and the observability gap}
\label{sec:blind-spot-gap}

\new{Three quantities organize everything that follows. \emph{Audit exposure} is residual harm weighted by follow-up share and normalized informativeness. \emph{True residual harm} is what society bears at the developer's best response, weighted by welfare. The \emph{observability gap} $B_w$ is residual harm weighted by the normalized observability shortfall $1-\pi_j\alpha_j(\varepsilon_j)$ and is the auditor's design objective throughout. Neither $\pi_j\alpha_j$ nor its complement is a literal event probability.}

\begin{definition}[Audit-exposure score]
\label{def:detected-harm}
$\mathrm{DH}(\pi, \varepsilon, m) = \sum_{j=1}^d \pi_j \alpha_j(\varepsilon_j) g_j(h_j, m_j).$
\end{definition}

\begin{definition}[True residual harm]
\label{def:true-residual-harm}
$\mathrm{TRH}(\pi, \varepsilon) = \sum_{j=1}^d w_j g_j\bigl(h_j, m^*(\pi, \varepsilon)_j\bigr).$
\end{definition}

\begin{definition}[Welfare-weighted observability gap]
\label{def:blind-spot}
\[
B_w(\pi, \varepsilon) \;=\; \sum_{j=1}^d w_j\, \bigl(1 - \pi_j \alpha_j(\varepsilon_j)\bigr)\, g_j\bigl(h_j, m^*(\pi, \varepsilon)_j\bigr).
\]
\end{definition}

$B_w$ weights true residual harm by the normalized observability shortfall $1-\pi_j\alpha_j(\varepsilon_j)$. \removed{$B_w \ge 0$ under DP.}\new{Its non-negativity is algebraic, following from $w_j,g_j\ge0$ and $0\le\pi_j\alpha_j(\varepsilon_j)\le1$; it is not a consequence of DP itself.} Because the weight is a reduced-form score rather than a miss probability, $B_w$ should be interpreted as an audit-design objective, not as the expected amount of literally undetected harm. Theorem~\ref{thm:naive-blindspot} gives a sufficient benchmark in which strategic reallocation strictly increases this objective; outside that benchmark its sign relative to a comparator is profile-dependent.

\subsection{A worked example with three dimensions}
\label{sec:example}

Consider $d = 3$ harm dimensions with $h = (1,1,1)$, heterogeneous welfare weights $w = (3,1,1)$, $\varepsilon_{\mathrm{tot}} = 3$, $\alpha_j(\varepsilon_j) = 1 - \exp(-\varepsilon_j)$, smooth residual harm $g_j(h_j, m_j) = h_j \exp(-m_j)$, $c_j(m_j) = m_j^2/2$, and developer cost budget $B = 1.5$. The numerics were computed using an exact KKT solver (script in the replication package).

Under the na\"ive uniform allocation ($\pi_j = 1/3$, $\varepsilon_j = 1$), $\alpha_j(1) \approx 0.632$ on every dimension; the developer's interior best response is $m^*_j \approx 0.177$ uniformly, with $g_j(h_j,m_j^*) \approx 0.838$, $\mathrm{DH} \approx 0.530$, $\mathrm{TRH} \approx 4.191$, and $B_w \approx 3.308$. Under a welfare-aware policy $\pi = (0.6, 0.2, 0.2)$, $\varepsilon = (2.4, 0.3, 0.3)$, the normalized informativeness scores become $\alpha = (0.909, 0.259, 0.259)$. The strategic developer concentrates mitigation on the more observable statistic ($m^* \approx (0.375, 0.049, 0.049)$), so $g(h,m^*) \approx (0.687, 0.952, 0.952)$: residual harm shifts toward less observable statistics, but those have small welfare weights; therefore, $\mathrm{TRH}$ drops to $\approx 3.966$ and $B_w$ to $\approx 2.742$ ($\sim$17\% reduction). In the running audit, the entries are subgroup false-positive-rate excesses estimated by the releases $M_{j,\varepsilon_j}(D)$; the welfare-aware auditor spends scarce follow-up capacity and privacy budget where society cares most, anticipates that the developer will respond to observability, and accepts higher residual harm only in the lower-weighted subgroups.

Table~\ref{tab:notation} summarizes the core notation used throughout.

\begin{table}[t]
\centering
\small
\caption{Core notation.}
\label{tab:notation}
\begin{tabular}{@{}ll@{}}
\toprule
Symbol & Meaning \\
\midrule
$h_j,w_j$ & Baseline population harm and welfare weight for statistic $j$ \\
$m_j,c_j,B$ & Mitigation effort, its cost, and the developer's cost budget \\
$g_j(h_j,m_j)$ & Residual harm after mitigation \\
$\pi_j,\varepsilon_j$ & Follow-up-capacity share and cumulative privacy budget \\
$\delta_j^{\mathrm{priv}}$ & Fixed approximate-DP failure probability \\
$\alpha_j(\varepsilon_j)$ & Normalized informativeness at the reference operating point \\
$\eta_j:=\pi_j\alpha_j(\varepsilon_j)$ & Effective observability score \\
$m^*(\pi,\varepsilon)$ & Developer's best response \\
$\mathrm{DH},\mathrm{TRH},B_w$ & Audit exposure, true residual harm, and observability gap \\
$\tau,\gamma$ & SPAD convergence tolerance and step size \\
\bottomrule
\end{tabular}
\end{table}

%====================================================================

\section{Strategic Blind Spots Under Privacy-Constrained Auditing}
\label{sec:blind-spots}

This section isolates the strategic mechanism and states a sufficient result. Privacy-constrained releases can create uneven observability across aggregate statistics, which a strategically rational developer may exploit by shifting mitigation away from less observable categories. The effect on the auditor's objective is not universal: it depends on the comparator and on how welfare weights align with the resulting residual-harm shift.

\subsection{Developer best response under fixed audit policy}

Fix an audit policy $(\pi, \varepsilon)$. Given the lower-level problem~\eqref{eq:dev-problem}, the developer's objective is convex on a convex feasible region. We have:

\begin{lemma}[Existence and uniqueness of best response]
\label{lem:uniqueness}
Suppose Assumptions~\ref{ass:detectability}--\ref{ass:cost-mitigation} hold and that $c_j$ are strictly convex and $g_j(h_j, \cdot)$ strictly convex in $m_j$. For any fixed $(\pi, \varepsilon)$ with $\pi \in \Delta^d$ and $\varepsilon \ge 0$, the developer's lower-level problem admits a unique best-response mitigation allocation $m^*(\pi, \varepsilon)$.
\end{lemma}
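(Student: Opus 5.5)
Existence follows from Weierstrass and uniqueness from strict convexity. Write the developer's objective as
\[
F(m)=\sum_{j=1}^d \eta_j\, g_j(h_j,m_j)+C(m), \qquad \eta_j:=\pi_j\alpha_j(\varepsilon_j)\in[0,1],
\]
minimized over the feasible set $K=\{m\in\mathbb{R}_+^d : C(m)\le B\}$.

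\emph{Existence.} First we show $K$ is nonempty, closed, bounded and convex.
\begin{itemize}
\item It contains $m=0$, since $c_j(0)=0$.
\item It is convex and closed, since each $c_j$ is convex and hence continuous on $(0,\infty)$. At the boundary point $0$, continuity is assumed implicitly; otherwise convexity still gives lower semicontinuity after taking the closed convex hull, and we take $c_j$ continuous as in the examples.
\item It is bounded: because every $c_j\ge0$, the constraint $C(m)\le B$ forces $c_j(m_j)\le B$ for each $j$, and coercivity of $c_j$ gives a bound $m_j\le \bar m_j$.
\end{itemize}
Hence $K$ is compact. Next, $F$ is continuous on $K$: $g_j(h_j,\cdot)$ is convex on $[0,\infty)$, hence continuous on the interior, and continuity at $0$ is the standard regularity carried by the examples ($h_je^{-\beta m}$, $\max\{h-\beta m,0\}$). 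Weierstrass then gives a minimizer.

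\emph{Uniqueness.} For each fixed $(\pi,\varepsilon)$, every $\eta_j\ge0$. So each summand $\eta_j g_j(h_j,\cdot)$ is convex (strictly so when $\eta_j>0$ and $h_j>0$, but we do not need this). Each $c_j$ is strictly convex by Assumption~\ref{ass:cost-mitigation}. Thus $F$ is a separable sum of the one-dimensional functions $\phi_j(m_j)=\eta_jg_j(h_j,m_j)+c_j(m_j)$, each of which is strictly convex.

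A separable sum of strictly convex coordinate functions is strictly convex on $\mathbb{R}_+^d$. If $m\neq m'$, some coordinate differs, and strict inequality in that coordinate together with weak inequality in the others gives $F(\tfrac12 m+\tfrac12 m')<\tfrac12F(m)+\tfrac12F(m')$. Two distinct minimizers in the convex set $K$ would therefore produce a strictly better midpoint, a contradiction.

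\emph{The main point.} Strict convexity must come from the cost term, not the harm term. When $\pi_j\alpha_j(\varepsilon_j)=0$ (for example $\pi_j=0$ or $\varepsilon_j=0$), or when $h_j=0$, the harm summand is flat in $m_j$. The strict convexity of $c_j$ alone still pins down $m_j^*$; in fact it forces $m_j^*=0$, because $c_j$ is strictly increasing. No further work is needed beyond noting this degenerate case explicitly.
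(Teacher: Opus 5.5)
Your proof is correct and follows the paper's route: uniqueness from strict convexity of the separable objective, which comes from the strictly convex $c_j$ even when $\pi_j\alpha_j(\varepsilon_j)=0$ (as you rightly note), and existence from coercivity of $c_j$ giving compactness; the only difference is that you bound the feasible set $\{m\ge0: C(m)\le B\}$ directly, whereas the paper bounds the objective's sublevel sets. Your hedge about continuity of $c_j$ at $0$ is unnecessary, since a convex, increasing $c_j$ on $[0,\infty)$ is automatically continuous there; the only genuinely tacit regularity is continuity of $g_j(h_j,\cdot)$ at $0$, which the paper also assumes implicitly.
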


\begin{proof}[Proof sketch]
The objective in~\eqref{eq:dev-problem} is a sum of strictly convex per-dimension terms ($\pi_j \alpha_j(\varepsilon_j) g_j(h_j, \cdot) + c_j(\cdot)$) over the convex feasible set $\{m \ge 0 : C(m) \le B\}$. Strict convexity provides at most one minimizer. Existence follows because $c_j(m_j) \to \infty$ as $m_j \to \infty$ (strict convexity of $c_j$ with $c_j(0) = 0$ implies coercivity); thus, the sublevel sets of the objective are compact in the feasible set. Hence, $m^*(\pi, \varepsilon)$ exists and is unique.
\end{proof}

At an interior optimum where $m^*_j > 0$, the first-order condition for dimension $j$ is
\begin{equation}
\pi_j \alpha_j(\varepsilon_j) \left| \frac{\partial g_j}{\partial m_j} \right| = (1 + \lambda)\, c'_j(m_j),
\label{eq:foc-dev}
\end{equation}
where $\lambda \ge 0$ is the multiplier on the cost budget constraint $C(m) \le B$. The marginal cost of mitigation in dimension $j$ has two components: the soft cost $c'_j(m_j)$ from the additive $C(m)$ term in the developer's objective and the shadow price $\lambda c'_j(m_j)$ of the hard budget; together, they give $(1+\lambda)\,c'_j(m_j)$. The developer equalizes the marginal reduction in observability-weighted exposure per unit (soft + shadow) cost across active dimensions; dimensions with high $\pi_j \alpha_j(\varepsilon_j)$ require more mitigation effort.

\subsection{The mechanism of blind spots}

Consider two otherwise symmetric dimensions, $j,k$, with different effective observability, $\eta_j:=\pi_j\alpha_j(\varepsilon_j)>\eta_k:=\pi_k\alpha_k(\varepsilon_k)$. By~\eqref{eq:foc-dev}, the developer chooses $m^*_j>m^*_k$, allocating more mitigation to the more observable statistic. Hence $g_k(h_k,m^*_k)>g_j(h_j,m^*_j)$: the less observable statistic retains more residual harm.

The audit-exposure score $\mathrm{DH}(\pi,\varepsilon,m^*)=\sum_\ell\eta_\ell g_\ell(h_\ell,m^*_\ell)$ therefore places a smaller normalized observation weight on the dimension with more residual harm. In contrast, $\mathrm{TRH}(\pi,\varepsilon)=\sum_\ell w_\ell g_\ell(h_\ell,m^*_\ell)$ weights residual harm by welfare. Misalignment can produce a blind spot.

\subsection{A sufficient benchmark for strategic worsening}

For any fixed audit policy and any comparator $m^{\mathrm{ns}}$, writing $r_j^*:=g_j(h_j,m_j^*)$, $r_j^{\mathrm{ns}}:=g_j(h_j,m_j^{\mathrm{ns}})$, and $u_j:=w_j(1-\eta_j)$ gives the exact identity
\begin{equation}
B_w(m^*)-B_w(m^{\mathrm{ns}})=\sum_j u_j\bigl(r_j^*-r_j^{\mathrm{ns}}\bigr).
\label{eq:gap-decomp}
\end{equation}
Anti-monotonicity of $r^*$ and $\eta$ does not by itself determine the sign of this sum. The following symmetric benchmark supplies sufficient conditions under which strategic response is strictly worse than an equal, non-strategic allocation.

\begin{theorem}[Strategic worsening in a symmetric two-statistic audit]
\label{thm:naive-blindspot}
Fix $d=2$, $h_1=h_2=h>0$, $g_j(h,m)=he^{-m}$, and $c_j(m)=m^2/2$. Let a fixed audit policy induce $0<\eta_2<\eta_1<1$, and suppose the unique interior strategic best response exhausts the hard budget $B$. Define $u_j:=w_j(1-\eta_j)$ and assume $u_2\ge u_1>0$. For the symmetric non-strategic comparator $m^{\mathrm{ns}}=(\sqrt{B},\sqrt{B})$,
\[
B_w(m^*)>B_w(m^{\mathrm{ns}}).
\]
\end{theorem}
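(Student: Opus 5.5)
Our plan is to combine the exact identity~\eqref{eq:gap-decomp} with the structure of the strategic best response under a binding budget. With $c_j(m)=m^2/2$, the hard budget constraint reads $m_1^2+m_2^2\le 2B$. Since the strategic best response is interior and exhausts the budget, we have $(m_1^*)^2+(m_2^*)^2=2B$. The comparator $m^{\mathrm{ns}}=(\sqrt B,\sqrt B)$ lies on the same circle. By~\eqref{eq:gap-decomp}, and writing $r(m):=he^{-m}$, it suffices to show
\[
u_1\bigl(r(m_1^*)-r(\sqrt B)\bigr)+u_2\bigl(r(m_2^*)-r(\sqrt B)\bigr)>0 .
\]

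First we pin down the ordering. The interior first-order condition~\eqref{eq:foc-dev} gives $\eta_j h e^{-m_j^*}=(1+\lambda)m_j^*$. The left side is decreasing in $m$ and the right side is increasing, so $m_j^*$ is strictly increasing in $\eta_j$. Hence $\eta_1>\eta_2$ yields $m_1^*>m_2^*$. Combined with the circle constraint, this gives $m_2^*<\sqrt B<m_1^*$. Consequently the second term $\Delta_2:=r(m_2^*)-r(\sqrt B)$ is positive and the first term $\Delta_1:=r(m_1^*)-r(\sqrt B)$ is negative.

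Next we use $u_2\ge u_1>0$ to bound the sum from below:
\[
u_1\Delta_1+u_2\Delta_2\ \ge\ u_1(\Delta_1+\Delta_2).
\]
This holds because $\Delta_2>0$ and $u_2\ge u_1$. The problem therefore reduces to showing $\Delta_1+\Delta_2>0$, that is,
\[
e^{-m_1^*}+e^{-m_2^*}>2e^{-\sqrt B}\quad\text{whenever } (m_1^*)^2+(m_2^*)^2=2B,\ m_1^*\neq m_2^*,\ m_j^*>0 .
\]
This step is the main obstacle. Strict convexity of $e^{-m}$ alone is not enough, because it compares against the arithmetic mean $(m_1+m_2)/2$, not the quadratic mean $\sqrt B$.

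We handle it in two stages.
\begin{itemize}
\item By Jensen's inequality, $e^{-m_1}+e^{-m_2}> 2e^{-(m_1+m_2)/2}$ strictly, since $m_1\ne m_2$.
\item By the QM--AM inequality, $(m_1+m_2)/2\le\sqrt{(m_1^2+m_2^2)/2}=\sqrt B$. Since $e^{-m}$ is decreasing, $2e^{-(m_1+m_2)/2}\ge 2e^{-\sqrt B}$.
\end{itemize}
Chaining the two gives the strict inequality. As a cross-check one can parametrize $m_1=\sqrt{2B}\cos\theta$, $m_2=\sqrt{2B}\sin\theta$ and verify that $\theta=\pi/4$ is the strict minimizer of $e^{-m_1}+e^{-m_2}$ on the arc. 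The chained argument above already suffices, though.

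Finally, since $u_1>0$ and $\Delta_1+\Delta_2>0$, the sum $u_1\Delta_1+u_2\Delta_2$ is strictly positive. By~\eqref{eq:gap-decomp}, this gives $B_w(m^*)>B_w(m^{\mathrm{ns}})$.
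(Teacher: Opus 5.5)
Your proof is correct and follows the paper's argument essentially step for step. You use FOC monotonicity to get $m_1^*>\sqrt B>m_2^*$ on the budget circle, chain strict Jensen with the QM--AM inequality to obtain $e^{-m_1^*}+e^{-m_2^*}>2e^{-\sqrt B}$, and bound $u_1\Delta_1+u_2\Delta_2\ge u_1(\Delta_1+\Delta_2)>0$ via identity~\eqref{eq:gap-decomp}, which is exactly the paper's split $u_1(\Delta r_1+\Delta r_2)+(u_2-u_1)\Delta r_2$ read as an inequality.
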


\begin{proof}[Proof sketch]
The first-order conditions imply $\eta_ihe^{-m_i^*}=(1+\lambda)m_i^*$. Since $m\mapsto me^m$ is strictly increasing, $\eta_1>\eta_2$ implies $m_1^*>m_2^*$. Budget exhaustion gives $(m_1^*)^2+(m_2^*)^2=2B$, so $m_1^*>\sqrt B>m_2^*$. Convexity of $e^{-m}$ and the root-mean-square inequality imply $e^{-m_1^*}+e^{-m_2^*}>2e^{-\sqrt B}$. Applying identity~\eqref{eq:gap-decomp} and $u_2\ge u_1$ then yields the strict inequality. Appendix~\ref{app:proofs} gives the complete argument.
\end{proof}

\subsection{What drives the gap: observability, cost, and welfare}

Three factors shape the strategic gap. \emph{Observability heterogeneity} changes the developer's mitigation incentives. \emph{Cost curvature} governs how strongly mitigation reallocates. \emph{Welfare alignment} determines whether that reallocation raises or lowers $B_w$ relative to the chosen comparator. Equation~\eqref{eq:gap-decomp} makes clear why no sign follows from heterogeneity alone.

\subsection{Corollary: harm-proportional allocation need not be optimal}

\new{The natural regulatory instinct---audit hardest where baseline harm is largest---can fail because baseline harm alone omits welfare, detection elasticity, and the developer's response. The next result is an existence claim, not a universal rejection of harm-proportionality:}

\begin{corollary}[Harm-proportional allocation is not universally optimal]
\label{cor:harm-prop}
There exist heterogeneous parameter profiles satisfying Assumptions~\ref{ass:detectability}--\ref{ass:cost-mitigation} and admitting an interior best response for which the harm-proportional policy $(\pi^{\mathrm{hp}}_j,\varepsilon^{\mathrm{hp}}_j)\propto h_j$ does not minimize $B_w(\pi,\varepsilon)$. In particular, let $h=(1,1)$, $w=(1,1)$, $\varepsilon_{\mathrm{tot}}=2$, $B=2$, $\alpha_j(\varepsilon_j)=1-e^{-\kappa_j\varepsilon_j}$ with $\kappa_1=2$ and $\kappa_2=1$, $c_j(m_j)=m_j^2/2$, and $g_j(h_j,m_j)=h_je^{-m_j}$. At the harm-proportional policy $\pi^{\mathrm{hp}}=(1/2,1/2)$ and $\varepsilon^{\mathrm{hp}}=(1,1)$, the feasible privacy-budget direction $\Delta\varepsilon=(-1,1)$ strictly decreases $B_w$ locally.
\end{corollary}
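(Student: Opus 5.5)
The plan is a first-order (directional-derivative) argument at the harm-proportional point. I would show that, at $(\pi^{\mathrm{hp}},\varepsilon^{\mathrm{hp}})$, the one-sided derivative of $B_w$ along $\Delta\varepsilon=(-1,1)$ (with $\pi$ held fixed) is strictly negative. A strictly negative directional derivative gives a strict local decrease, which shows that the harm-proportional policy is not a minimizer; this proves the existence claim. Feasibility of the direction is immediate: $\varepsilon^{\mathrm{hp}}+t\Delta\varepsilon=(1-t,1+t)$ keeps the sum equal to $\varepsilon_{\mathrm{tot}}=2$ and stays nonnegative for $t\in[0,1]$. The profile satisfies Assumptions~\ref{ass:detectability}--\ref{ass:cost-mitigation} by inspection, so Lemma~\ref{lem:uniqueness} gives a unique best response $m^*(\pi,\varepsilon)$.

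\emph{Step 1: the best response and slackness of the budget.} At the hp policy, $\eta_1=\tfrac12(1-e^{-2})\approx0.432$ and $\eta_2=\tfrac12(1-e^{-1})\approx0.316$. I would first guess $\lambda=0$ in \eqref{eq:foc-dev}, which reduces the first-order conditions to $\eta_j e^{-m_j}=m_j$. The map $m\mapsto me^{m}$ is strictly increasing, so each equation has a unique root in $(0,\eta_j)$. Numerically $m_1^*\approx0.315$ and $m_2^*\approx0.246$. Then $C(m^*)<\eta_1^2<0.2<B=2$, so the budget is slack and $\lambda=0$ is confirmed. The same bound $m_j<\eta_j<1$ holds uniformly near the hp policy, so the budget remains slack in a neighborhood. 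Hence $m^*$ is interior there and is determined by the scalar equations $F_j(m,\eta_j)=\eta_je^{-m}-m=0$.

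\emph{Step 2: implicit differentiation.} We have $\partial_mF_j=-(\eta_je^{-m}+1)=-(1+m_j)$ at the root, which is nonzero. The implicit function theorem then gives that $m_j^*$ is $C^1$ in $\eta_j$, with
\[
\frac{dm_j^*}{d\eta_j}=\frac{e^{-m_j^*}}{1+m_j^*}.
\]
Since $B_w=\sum_j(1-\eta_j)e^{-m_j^*(\eta_j)}$ separates across coordinates, we get
\[
\frac{\partial B_w}{\partial\eta_j}=-a_j,\qquad a_j:=e^{-m_j^*}\Bigl(1+\frac{(1-\eta_j)e^{-m_j^*}}{1+m_j^*}\Bigr)>0 .
\]
By the chain rule, $d\eta_j/d\varepsilon_j=\pi_j\kappa_je^{-\kappa_j\varepsilon_j}$. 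At the hp point this is $e^{-2}\approx0.135$ for $j=1$ and $\tfrac12e^{-1}\approx0.184$ for $j=2$. The directional derivative along $\Delta\varepsilon=(-1,1)$ is therefore
\[
D:=a_1e^{-2}-\tfrac12a_2e^{-1}.
\]

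\emph{Step 3: sign of $D$, the main obstacle.} Numerically $a_1\approx0.96$ and $a_2\approx1.12$, so $D\approx0.130-0.206\approx-0.076<0$. The difficulty is making this sign rigorous rather than merely numerical. I would bracket each root with explicit rational bounds, for instance $0.31<m_1^*<0.32$ and $0.24<m_2^*<0.25$, verifying each bracket by evaluating the monotone function $me^{m}$ at the endpoints. Monotonicity in $m$ then gives an upper bound on $a_1$ and a lower bound on $a_2$, and I would check $a_1^{\max}e^{-2}<\tfrac12a_2^{\min}e^{-1}$ directly.

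There is a cruder route that may avoid root-finding altogether. Since $m_j^*>0$, we have $a_1< 1+(1-\eta_1)=2-\eta_1<1.57$, so $a_1e^{-2}<0.213$. But a sufficient lower bound on $a_2$ needs $a_2>1.16$, and this crude route falls short of it. So the plan is to use the bracketing bounds, which leave a comfortable margin: the estimated gap of about $0.076$ is far larger than the errors induced by the brackets. A strictly negative $D$ then gives $B_w(\pi^{\mathrm{hp}},\varepsilon^{\mathrm{hp}}+t\Delta\varepsilon)<B_w(\pi^{\mathrm{hp}},\varepsilon^{\mathrm{hp}})$ for all small $t>0$, which completes the proof.
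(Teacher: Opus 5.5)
Your proof is correct, and through Step 2 it matches the paper's: the budget is slack, the best response separates as $m_j^*=W(\eta_j)$, and your $a_j$ is exactly the paper's $Q(\eta_j)$. This gives the same directional derivative $D=e^{-2}Q(\eta_1)-\tfrac12e^{-1}Q(\eta_2)$.

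The difference is Step 3, where the paper shows that what you call the main obstacle disappears. You already noted that $m\mapsto me^{m}$ is strictly increasing, so $m_j^*$ is strictly increasing in $\eta_j$. Hence $Q(\eta)=e^{-W(\eta)}\bigl[1+(1-\eta)e^{-W(\eta)}/(1+W(\eta))\bigr]$ is built from positive, strictly decreasing factors on $[0,1)$, and is itself strictly decreasing. Since $\eta_1>\eta_2$, this gives $a_1<a_2$. The chain-rule weights compare the same way: $e^{-2}<\tfrac12e^{-1}$ because $2<e$. Each positive factor of $e^{-2}a_1$ is therefore smaller than the corresponding factor of $\tfrac12e^{-1}a_2$, so $D<0$ with no root-finding.

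Your bracketing route is still valid. The brackets $0.31<m_1^*<0.32$ and $0.24<m_2^*<0.25$ check out, and monotonicity of $a_j$ in $m_j^*$ then gives $a_1^{\max}e^{-2}\approx0.131<0.204\approx\tfrac12a_2^{\min}e^{-1}$. Its one advantage is that it would also settle profiles where the weight comparison and the $Q$ comparison point in opposite directions, which monotonicity alone cannot decide.

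For this profile, though, the paper's argument is shorter and exact. It also explains why the counterexample works: the more observable statistic is both less informative and less valuable at the margin.
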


\begin{proof}[Proof sketch]
At the displayed profile the lower-level cost constraint is slack, so the best response is separable. Accounting for the developer's response shows that moving privacy budget from the first statistic toward the second produces a negative directional derivative. Appendix~\ref{app:proofs} supplies the derivative and counterexample calculation.
\end{proof}

%====================================================================

\section{Optimal Private Audit Design}
\label{sec:optimal-design}

We now characterize audit designs that minimize the welfare-weighted observability gap $B_w$ when the developer responds rationally. We show three results: (i) harm-proportionality is not universally optimal, with a heterogeneous profile furnishing a counterexample (Theorem~\ref{thm:nonproportional}); (ii) the bilevel problem admits a single-level reformulation via the developer's Karush--Kuhn--Tucker (KKT) conditions (Theorem~\ref{thm:bilevel-reduction}); and (iii) we provide a projected-hypergradient heuristic, Strategic Private Audit Design (SPAD).

\subsection{When harm-proportionality fails}

\new{When harm-proportionality fails, the relevant derivative combines the audit design's direct effect on the observability weight with the developer's coupled response. No single statistic of the baseline harm profile determines the allocation:}

\begin{theorem}[Harm-proportionality is not universally optimal]
\label{thm:nonproportional}
Under Assumptions~\ref{ass:detectability}--\ref{ass:cost-mitigation}, there exist heterogeneous parameter profiles for which the joint harm-proportional allocation does not minimize $B_w(\pi,\varepsilon)$. Consequently, every minimizer $(\pi^\star,\varepsilon^\star)$ for such a profile satisfies
\[
(\pi^\star,\varepsilon^\star)\neq\left(\frac{h}{\sum_kh_k},\frac{\varepsilon_{\mathrm{tot}}h}{\sum_kh_k}\right).
\]
\end{theorem}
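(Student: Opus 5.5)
The plan is to derive Theorem~\ref{thm:nonproportional} from the explicit counterexample in Corollary~\ref{cor:harm-prop}. The statement is existential, so one admissible profile suffices. At that profile I will show that the harm-proportional point has a feasible descent direction for $B_w$, so it is not a minimizer. The clause ``every minimizer differs from the harm-proportional point'' then follows by contraposition: a minimizer has no feasible descent direction. This holds whether or not a minimizer exists.

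\emph{Step 1: admissibility and location of the harm-proportional point.} Take $h=w=(1,1)$, $\varepsilon_{\mathrm{tot}}=B=2$, $\alpha_j(\varepsilon)=1-e^{-\kappa_j\varepsilon}$ with $\kappa=(2,1)$, $c_j(m)=m^2/2$ and $g_j(h,m)=he^{-m}$.
\begin{itemize}
\item Assumption~\ref{ass:detectability} holds: $\alpha_j$ is $C^1$ and strictly increasing, with $\alpha_j(0)=0$ and limit $1$.
\item Assumption~\ref{ass:cost-mitigation} holds: $c_j$ is strictly convex and coercive with $c_j(0)=0$, and $e^{-m}$ is strictly convex.
\item The profile is heterogeneous through $\kappa$.
\item Since $h=(1,1)$, the harm-proportional point is $\pi^{\mathrm{hp}}=(1/2,1/2)$ and $\varepsilon^{\mathrm{hp}}=(1,1)$.
\item The direction $\Delta\varepsilon=(-1,1)$ keeps $\sum_j\varepsilon_j=2$ and $\varepsilon\ge0$ for step sizes $t\in[0,1]$, so it is feasible.
\end{itemize}

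\emph{Step 2: the best response near this point.} By Lemma~\ref{lem:uniqueness} the best response is unique. With $\eta_j=\pi_j\alpha_j(\varepsilon_j)$, the unconstrained first-order condition is $\eta_je^{-m_j}=m_j$, that is, $m_j=W(\eta_j)$ with $W$ the Lambert function. Here $\eta\approx(0.432,0.316)$, giving $m^*\approx(0.315,0.247)$. Then $C(m^*)\approx0.08<2$, so $\lambda=0$. By continuity the budget stays slack near $\varepsilon^{\mathrm{hp}}$. The implicit function theorem applied to $me^m=\eta$ gives $dm/d\eta=m/(\eta(1+m))$, so $m^*$ is locally $C^1$ and the problem separates across the two statistics.

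\emph{Step 3: sign of the directional derivative.} Write each summand of $B_w$ as $f(\eta)=(1-\eta)e^{-W(\eta)}$, and set $r=e^{-m}=m/\eta$. Then
\[
f'(\eta)=-r\Bigl[1+\frac{(1-\eta)m}{\eta(1+m)}\Bigr].
\]
The chain rule gives
\[
D_{\Delta\varepsilon}B_w=-f'(\eta_1)\,\pi_1\kappa_1e^{-\kappa_1}+f'(\eta_2)\,\pi_2\kappa_2e^{-\kappa_2}.
\]
The main obstacle is certifying the sign of this expression rigorously, because it depends on the Lambert values. Numerically, $f'(\eta_1)\approx-0.96$ and $f'(\eta_2)\approx-1.12$. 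The weights are $\pi_1\kappa_1e^{-2}=e^{-2}\approx0.135$ and $\pi_2\kappa_2e^{-1}=\tfrac12e^{-1}\approx0.184$. So $D_{\Delta\varepsilon}B_w\approx0.129-0.206<0$.

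To make this airtight I would bracket each $m_j$ by evaluating $me^m$ at nearby rationals, for example $m_1\in[0.31,0.32]$ and $m_2\in[0.24,0.25]$. I would then use monotone interval bounds on $r$ and on the bracketed factor in $f'$. The margin of about $0.07$ is comfortably larger than these bracketing errors.

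A negative directional derivative along a feasible direction means $B_w(\pi^{\mathrm{hp}},\varepsilon^{\mathrm{hp}}+t\Delta\varepsilon)<B_w(\pi^{\mathrm{hp}},\varepsilon^{\mathrm{hp}})$ for small $t>0$. This proves both the non-optimality claim and the consequence stated in Theorem~\ref{thm:nonproportional}.
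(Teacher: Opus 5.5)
Your proof is correct and follows the paper's proof for every step except the final sign check. The overlap is substantial:
\begin{itemize}
\item You use the same profile from Corollary~\ref{cor:harm-prop}.
\item You make the same slack-budget reduction to $m_j^*=W(\eta_j)$.
\item Your $-f'(\eta)$ coincides with the paper's $Q(\eta)=e^{-W(\eta)}\bigl[1+(1-\eta)e^{-W(\eta)}/(1+W(\eta))\bigr]$, so both arguments reach the directional derivative $\tfrac12\bigl[\alpha_1'(1)Q(\eta_1)-\alpha_2'(1)Q(\eta_2)\bigr]$.
\end{itemize}

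The difference is how the sign of that derivative is certified. You bracket the Lambert values and propagate interval bounds. Your brackets $m_1\in[0.31,0.32]$ and $m_2\in[0.24,0.25]$ do give an upper bound near $-0.07$, so this step closes.

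The paper instead notes that $Q$ is strictly decreasing. The reason is that $e^{-W(\eta)}$ decreases, and $(1-\eta)e^{-W(\eta)}/(1+W(\eta))$ is a product of positive decreasing factors. Hence $\eta_1>\eta_2$ gives $0<Q(\eta_1)<Q(\eta_2)$. Combined with $0<\alpha_1'(1)=2e^{-2}<e^{-1}=\alpha_2'(1)$, the sign follows with no numerical evaluation.

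Slackness is handled the same way in the paper. It uses $W(\eta_j)<\eta_j<1/2$, so $C(m^*)<1/4<B$, rather than computing $m^*$.

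The monotonicity route removes what you flagged as the main obstacle. Unchanged, it covers every pair $\kappa_1>\kappa_2\ge1$, because $\kappa e^{-\kappa}$ is decreasing on $[1,\infty)$. Your interval computation certifies only this one instance, though it does give an explicit quantitative margin.
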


\begin{proof}[Proof sketch]
The gradient of $B_w$ has a \emph{direct} term in $(\pi,\varepsilon)$ through the observability-shortfall factor $1-\pi_j\alpha_j(\varepsilon_j)$ and an \emph{indirect} term through $m^*(\pi,\varepsilon)$. Implicit differentiation of the full developer KKT system captures cross-dimensional responses when the hard budget binds. The two-dimensional slack-budget profile in Corollary~\ref{cor:harm-prop}, with $\kappa_1\ne\kappa_2$, has a feasible descent direction at the harm-proportional allocation. Appendix~\ref{app:proofs} gives the coupled derivative and counterexample.
\end{proof}

\paragraph{Qualitative structure.} Welfare weights, informativeness elasticities, mitigation technology, and the hard-budget shadow price jointly shape the hypergradient. When the hard budget binds, changing one audit coordinate generally changes mitigation in every dimension. The welfare--observability mismatch parallels Stackelberg security game design \citep{Korzhyk2011-qp, Blocki2015-du, Schlenker2017-le}.

\subsection{Bilevel reduction via KKT substitution}

\new{Bilevel problems are computationally awkward because the auditor's objective contains the developer's optimizer. Under strict convexity, the developer's best response is characterized exactly by its first-order conditions, so substituting those conditions turns the two-level problem into a single-level one that a projected-gradient method can attack directly---this is what makes SPAD implementable:}

\begin{theorem}[Single-level reduction via KKT substitution]
\label{thm:bilevel-reduction}
Suppose Assumptions~\ref{ass:detectability}--\ref{ass:cost-mitigation} hold, with $c_j$ and $g_j$ twice continuously differentiable and the developer's lower-level problem strictly convex. Then the auditor's bilevel problem~\eqref{eq:auditor-problem} admits an equivalent single-level reformulation obtained by substituting the developer's KKT conditions:
\begin{align}
\min_{\pi, \varepsilon, m, \mu,\nu}\ &\sum_{j=1}^d w_j\bigl(1 - \pi_j \alpha_j(\varepsilon_j)\bigr) g_j(h_j, m_j) \label{obj:single-level}\\
\text{s.t.}\ &\pi_j \alpha_j(\varepsilon_j) \tfrac{\partial g_j}{\partial m_j} + (1 + \mu)\, c'_j(m_j)-\nu_j = 0 \quad \forall j, \label{cstr:stationarity}\\
& \sum_j c_j(m_j) \le B,\quad m_j \ge 0, \label{cstr:primal}\\
& \mu \ge 0,\quad \nu_j\ge0,\quad \mu\!\left(B - \sum_j c_j(m_j)\right) = 0,\quad \nu_jm_j=0\quad\forall j, \label{cstr:dual}\\
& \pi \in \Delta^d,\ \varepsilon \ge 0,\ \sum_j \varepsilon_j \le \varepsilon_{\mathrm{tot}}.
\end{align}
Here $\mu$ is the hard cost-budget multiplier ($=\lambda$ in~\eqref{eq:foc-dev}) and $\nu_j$ is the multiplier on $m_j\ge0$. This is a mathematical program with complementarity constraints (MPCC). Under MPEC-LICQ and strict complementarity for the active constraints---conditions that are not automatic---fixing the active set produces a local smooth nonlinear program. A slack cost budget with $\mu=0$ is nondegenerate; degeneracy instead arises, for example, when the budget binds while $\mu=0$, or when $m_j=0$ and $\nu_j=0$.
\end{theorem}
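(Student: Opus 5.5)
The plan is to show that, for each fixed $(\pi,\varepsilon)$, the constraints \eqref{cstr:stationarity}--\eqref{cstr:dual} are satisfied by exactly one $m$, namely $m^*(\pi,\varepsilon)$ from Lemma~\ref{lem:uniqueness}. Once this holds, substituting that $m$ into the objective \eqref{obj:single-level} reproduces $B_w(\pi,\varepsilon)$ from \eqref{eq:auditor-problem}. Both problems then share the same upper-level feasible set, so they have the same optimal value. Their minimizers also correspond under the projection $(\pi,\varepsilon,m,\mu,\nu)\mapsto(\pi,\varepsilon)$, and this correspondence holds for global minimizers.

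\emph{Step 1: KKT conditions are necessary.} I fix $(\pi,\varepsilon)$ and write the lower-level problem~\eqref{eq:dev-problem} as the minimization of the convex function $F(m)=\sum_j\eta_j g_j(h_j,m_j)+C(m)$, where $\eta_j=\pi_j\alpha_j(\varepsilon_j)\ge0$. The constraints are $C(m)-B\le0$, which is convex because each $c_j$ is convex, and $-m_j\le0$, which is affine. Because $B>0$ and $c_j(0)=0$, the point $m=0$ satisfies $C(m)<B$ strictly, so Slater's condition holds. Convexity plus Slater gives the existence of multipliers $\mu,\nu_j\ge0$ satisfying stationarity,
\[
\eta_j\,\partial_m g_j(h_j,m_j)+c_j'(m_j)+\mu c_j'(m_j)-\nu_j=0,
\]
together with complementary slackness. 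These are exactly \eqref{cstr:stationarity}--\eqref{cstr:dual}. Hence $m^*$ admits multipliers making the tuple feasible for the single-level program.

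\emph{Step 2: KKT conditions are sufficient and the solution is unique.} The problem is convex and differentiable, since $g_j,c_j\in C^2$. Therefore any $(m,\mu,\nu)$ satisfying the KKT system gives a global minimizer $m$. By Lemma~\ref{lem:uniqueness}, that minimizer must be $m^*(\pi,\varepsilon)$. It follows that the $m$-projection of the single-level feasible set at $(\pi,\varepsilon)$ is the singleton $\{m^*(\pi,\varepsilon)\}$. The multipliers need not be unique, but the objective \eqref{obj:single-level} does not depend on $\mu$ or $\nu$, so this does not matter.

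\emph{Step 3: equivalence and the MPCC remarks.} Combining Steps 1 and 2, for every feasible $(\pi,\varepsilon)$ the single-level objective over the fibre equals $B_w(\pi,\varepsilon)$, which yields the equivalence. The constraints $\mu(B-C(m))=0$ and $\nu_jm_j=0$ are complementarity constraints, so the program is an MPCC. For the local-NLP statement, I fix the active set: whether the budget binds, and which $m_j=0$. Under strict complementarity, each complementarity pair then reduces to equality constraints together with strict inequalities that stay inactive nearby. MPEC-LICQ ensures the resulting program is regular. The nondegeneracy remarks follow directly from the definition of strict complementarity: a slack budget with $\mu=0$ is strictly complementary, whereas $\mu=0$ with a binding budget, or $m_j=\nu_j=0$, is not.

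I expect the main obstacle to be stating carefully that the equivalence concerns global minimizers and objective values. Local minimizers of an MPCC can correspond to non-local points only through multiplier non-uniqueness, which is why the theorem imposes MPEC-LICQ. The Slater verification is the step that needs the standing assumption $B>0$, so I would make that assumption explicit.
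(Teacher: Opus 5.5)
Your proposal is correct and follows the paper's proof essentially step for step. Slater's condition makes the developer's KKT system necessary, and convexity makes it sufficient. Uniqueness from Lemma~\ref{lem:uniqueness} then reduces the $m$-part of each fibre to $m^*(\pi,\varepsilon)$, so the substitution is globally equivalent, and fixing the active set under MPEC-LICQ and strict complementarity gives the local smooth program.

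The only difference is minor and concerns your Slater point. You use $m=0$, where the constraints $m_j\ge0$ are active and which lies on the boundary of the domain $\mathbb{R}_+^d$ of $c_j$ and $g_j$. That choice needs the refined Slater condition and differentiability of $c_j,g_j$ at $0$. The paper instead takes a small strictly positive $m$, and continuity of $C$ lets you make that shift at no cost.
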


\begin{proof}[Proof sketch]
The developer's KKT conditions are necessary and sufficient based on strict convexity (Lemma~\ref{lem:uniqueness}); imposing them yields an equivalent MPCC. The MPEC-LICQ and generic strict complementarity arguments are deferred to Appendix~\ref{app:proofs}.
\end{proof}

\paragraph{Computational consequence.} On a locally fixed active set, $\nabla B_w$ admits an \emph{analytical} form from one bordered KKT solve for the indirect term plus the closed-form direct term. A \emph{numerical} variant uses $2d$ forward finite-difference perturbations through the developer's best response. Random restarts can explore different local solutions; penalty relaxation or nonlinear-program solvers such as IPOPT and SNOPT are alternatives if MPEC-LICQ is in doubt.

\subsection{Algorithm: Strategic Private Audit Design (SPAD)}

\begin{algorithm}[H]
\caption{Strategic Private Audit Design (SPAD)}
\label{alg:spad}
\small
\textbf{Input:} harm prior $h$, welfare weights $w$, informativeness functions $\{\alpha_j\}$, mitigation costs $\{c_j\}$, total privacy budget $\varepsilon_{\mathrm{tot}}$, developer type $\theta$, tolerance $\tau$, step size $\gamma$.\\
\textbf{Output:} locally improved $\pi^\star, \varepsilon^\star$.
\begin{algorithmic}[1]
\State Initialize $\pi^{(0)} \in \Delta^d$, $\varepsilon^{(0)} \in \mathcal{E}$ where $\mathcal{E} \coloneqq \{\varepsilon \in \mathbb{R}_+^d : \sum_j \varepsilon_j \le \varepsilon_{\mathrm{tot}}\}$. Set $t \gets 0$.
\While{not converged}
    \State \textbf{Inner solver:} compute $m^\star(\pi^{(t)}, \varepsilon^{(t)})$ by a convex program (projected gradient or interior point).
    \State \textbf{Objective:} $B_w^{(t)} \gets \sum_j w_j (1 - \pi^{(t)}_j \alpha_j(\varepsilon^{(t)}_j))\, g_j(h_j, m^\star_j)$.
    \State \textbf{Hypergradient:} compute $\nabla_{\pi,\varepsilon} B_w^{(t)}$ as the sum of the direct term in the observability-shortfall factor and the indirect term through $m^\star$, either via implicit differentiation of the bordered developer KKT system or via $2d$ forward finite-difference perturbations of the developer best response.
    \State \textbf{Projected step:} $x^{(t+1)} \gets \mathrm{Proj}_{\Delta^d \times \mathcal{E}}\!\left(x^{(t)} - \gamma \nabla B_w^{(t)}\right)$, where $x=(\pi,\varepsilon)$.
    \State Check the projected-step residual $\|x^{(t)}-x^{(t+1)}\| \le \tau$ (equivalently, a scaled projected-gradient mapping). $t \gets t+1$.
\EndWhile
\State \Return $(\pi^{(t)}, \varepsilon^{(t)})$.
\end{algorithmic}
\end{algorithm}

\paragraph{Complexity.} Each analytical iteration requires a bordered KKT solve; the numerical implementation instead resolves the lower-level problem after each of $2d$ finite-difference perturbations. The iteration cap and restart count are reported in Appendix~\ref{app:experiment-setup}. More advanced DP composition constraints \citep{Bassily2016-om, Lee2018-gj, Liu2019-lw, Bu2020-pl} change the auditor's feasible set.

\subsection{Robust variant: uncertainty over developer types}

When the developer's type $\theta$ (cost budget, rationality) is uncertain, we formulate
$\min_{\pi, \varepsilon} \max_{\theta \in \Theta} B_w(\pi, \varepsilon; \theta)$.
For finite or sample-approximated $\Theta$, a sample-based objective yields a tractable surrogate. Alternating auditor and adversary updates are a possible heuristic, but convergence guarantees and duality-gap bounds are deferred to future work.

%====================================================================

\section{Synthetic Evaluation Protocol}
\label{sec:experiments}

\noindent\emph{The protocol below grounds empirical follow-up work; we report a single preliminary ablation in \S\ref{sec:ablation} to ground the abstract's quantitative claim. The protocol validates SPAD's optimization \emph{given} the model's strategic-developer assumption; whether real developers behave strategically in privacy-protected audits is an empirical question for follow-up work.}

\removed{The environments span multiple dimensions, harm priors, detectability curves, cost families, privacy budgets, developer types, and auditor baselines.}\new{The protocol varies three forces: how scarce the global privacy budget is, how unevenly normalized informativeness responds across statistics, and how flexibly the developer can reallocate mitigation. Fully strategic, boundedly rational, and non-strategic developers separate equilibrium response from limited or absent response; uniform, harm-proportional, welfare-proportional, and uncertainty-focused auditors reveal which gains come specifically from anticipating that response. Because the simulation exposes both the audit-exposure score and true residual harm, it measures $B_w$ directly. Appendix~\ref{app:experiment-setup} records the full parameter grid, statistical methodology, and tables.}

\subsection{Preliminary ablation: SPAD versus na\"ive baselines}
\label{sec:ablation}

Across the reported fully strategic environments with $\varepsilon_{\mathrm{tot}}\ge0.5$, SPAD lowers $B_w$ by \spadBredMin--\spadBredMax\% relative to uniform allocation when welfare is uniform and by \spadBredNuwMin--\spadBredNuwMax\% when welfare is non-uniform. The sharper comparison is welfare-proportional allocation: SPAD's remaining \spadVsWpNuwMin--\spadVsWpNuwMax\% advantage in the non-uniform-welfare case measures the value of anticipating the developer's best response beyond simply targeting welfare. At the tightest privacy budget the tested policies are empirically similar; we do not claim a general lower-bound explanation for that pattern. These are single-environment ablations under quadratic costs and exponential residual harm, not evidence of external validity. Appendix~\ref{app:experiment-setup} reports the 8{,}000 evaluations, 50-seed uncertainty estimates, and full comparisons.

\new{\paragraph{Interpretation.} The growing SPAD advantage when welfare and observability decouple is consistent with the model's strategic mechanism, but Theorem~\ref{thm:naive-blindspot} applies only to its stated symmetric benchmark and does not explain every experimental cell. What remains untested is the central external-validity question---whether real developers best-respond to announced follow-up allocations and how production mechanisms map privacy budgets into normalized informativeness scores $\alpha_j$.}

%====================================================================

\section{Related Work}
\label{sec:related}

\paragraph{Privacy and audit-policy literatures.} Empirical DP auditors test whether implementations satisfy claimed privacy guarantees \citep{Tramer2022-zi, Nasr2023-if, Kong2024-go}. A related security literature develops attacks or defenses for measuring leakage from trained models \citep{Fredrikson2015-vq, Jia2019-ak, Nasr2019-bg, Ye2022-wc, Carlini2023-vm}, while other work studies privacy games, private federated learning, or deployment guidance \citep{Shokri2015-mj, Naseri2022-gh, Ponomareva2023-pc}. Our question differs: we use a DP-constrained release as an \emph{observability interface} for downstream-harm auditing and model a developer that reallocates mitigation in response. \citet{Yang2023-wi} give a game-theoretic analysis of auditing DP algorithms with an epistemically disparate herd; \citet{Das2026-vv} study optimal auditing of adversarial agents without DP composition.

\paragraph{Strategic agents, audit games, and privacy mechanisms.} Strategic classification \citep{Hardt2015-ci, Dong2018-ht, Chen2018-lb} and performative prediction \citep{Perdomo2020-gp} model agent gaming but not multidimensional reallocation under noised audits; audit games \citep{Blocki2015-du} and security games \citep{Korzhyk2011-qp, Schlenker2017-le, Yan2019-gc} supply the bilevel template but assume clean observations. \new{Complementary political-economy work endogenizes vendor auditability, deployer monitoring, and lock-in under evidence-dependent enforcement \citep{Burnat2026-jd}. The present model instead fixes the mechanism family and studies mitigation under per-statistic follow-up and privacy-budget allocation.} Privacy-aware mechanism design and adaptive composition \citep{Nissim2012-dp, Duchi2013-ki, Ghosh2013-wq, Bassily2016-om, Bun2018-fh, Dwork2019-vw, Bu2020-pl} treat $\varepsilon$ as a budget, not a per-statistic Stackelberg choice. Secure multiparty computation (MPC) for fairness monitoring \citep{He2026-pp} targets post-market hiring audits under a different threat model and complements our DP-noised strategic-developer regime. Our contribution is the conjunction of DP composition, per-statistic commitment, and strategic mitigation reallocation. \removed{\emph{Limitations:} one-shot commitment, exact best-response, mitigation-only adaptation; relaxations (repeated audits, noisy $\varepsilon$, concealment, bounded rationality) preserve Theorem~\ref{thm:naive-blindspot} but not SPAD optimality.}

\new{\section{Limitations}\label{sec:limitations}
The guarantees apply to a fixed, enforceable audit protocol: a regulator or trusted service executes the calibrated mechanism family, while the developer controls only mitigation. If the developer can replace or degrade the release mechanism, the feasible game expands and SPAD no longer solves it. The reduced-form $\alpha_j(\varepsilon_j)$ also freezes the residual-harm argument of the canonical power function $\widetilde\alpha_j(\varepsilon_j,g_j(h_j,m_j))$; restoring that coupling changes the developer's first-order condition and SPAD's quantitative allocation. The game is one-shot, assumes an exact developer best response and an observable announced audit design, and uses basic composition over potentially overlapping statistics. Repeated audits, bounded rationality, concealment, parallel composition for disjoint groups, and amplification from subsampling require different models. Finally, the reported ablation validates optimization within the model but does not establish that deployed developers best-respond or that production audit mechanisms follow the assumed power curves.}

\new{\section{Conclusion and future directions}\label{sec:conclusion}
Privacy-constrained audits can become strategic environments: the same observation limits that protect individuals reveal where an auditor has less statistical power. Under a fixed protocol for aggregate harm statistics, we formalized this problem as a bilevel Stackelberg game and introduced the welfare-weighted observability gap $B_w$. The effect of strategic response on $B_w$ is profile-dependent; we proved strict worsening in a symmetric two-statistic benchmark and showed by counterexample that harm-proportional allocation is not universally optimal. The full KKT-coupled hypergradient and single-level reformulation support Strategic Private Audit Design (SPAD), whose preliminary synthetic ablation lowers $B_w$ within the modeled environment. Three directions follow. \emph{Empirical:} test whether real developers respond to announced follow-up allocations and calibrate normalized power curves for production mechanisms. \emph{Theoretical:} analyze the full coupling $\widetilde\alpha_j(\varepsilon_j,g_j(h_j,m_j))$, broader sufficient conditions for strategic worsening, and repeated audits. \emph{Institutional:} endogenize the global privacy-loss envelope $\varepsilon_{\mathrm{tot}}$, which is currently fixed by the data controller, regulator, or audit agreement, as a welfare trade-off between privacy and observability \citep{Abowd2022-mf}.}

%====================================================================

\singlespacing
\printbibliography

\clearpage
\onehalfspacing
\appendix

\section{Proofs}
\label{app:proofs}

\subsection{Proof of Theorem~\ref{thm:naive-blindspot}}

At an interior optimum, the first-order conditions are
\[
\eta_i h e^{-m_i^*}=(1+\lambda)m_i^*,\qquad i\in\{1,2\}.
\]
Equivalently, $m_i^*e^{m_i^*}=\eta_i h/(1+\lambda)$. Because $m\mapsto me^m$ is strictly increasing on $\mathbb{R}_+$, $\eta_1>\eta_2$ implies $m_1^*>m_2^*$. The binding budget gives $(m_1^*)^2+(m_2^*)^2=2B$, so $m_1^*>\sqrt B>m_2^*$. Define
\[
\Delta r_i:=h\bigl(e^{-m_i^*}-e^{-\sqrt B}\bigr).
\]
Then $\Delta r_1<0<\Delta r_2$. Moreover, strict convexity of $m\mapsto e^{-m}$ and the root-mean-square inequality give
\[
\frac{e^{-m_1^*}+e^{-m_2^*}}{2}
>
e^{-(m_1^*+m_2^*)/2}
\ge
e^{-\sqrt B},
\]
where the first inequality is strict because $m_1^*\ne m_2^*$. Hence $\Delta r_1+\Delta r_2>0$. Applying identity~\eqref{eq:gap-decomp},
\begin{align*}
B_w(m^*)-B_w(m^{\mathrm{ns}})
&=u_1\Delta r_1+u_2\Delta r_2\\
&=u_1(\Delta r_1+\Delta r_2)+(u_2-u_1)\Delta r_2>0,
\end{align*}
because $u_1>0$, $u_2\ge u_1$, and $\Delta r_2>0$.

\subsection{Proof of Corollary~\ref{cor:harm-prop}}

At $\pi=(1/2,1/2)$ and $\varepsilon=(1,1)$, $0<\eta_j<1/2$. The unconstrained first-order condition is $\eta_je^{-m_j}=m_j$, so $m_j^*=W(\eta_j)<1/2$, where $W$ is the Lambert $W$ function. Thus $\sum_j(m_j^*)^2/2<1/4<B=2$, and the hard budget is slack. Define
\[
Q(\eta):=e^{-W(\eta)}\left[1+
\frac{(1-\eta)e^{-W(\eta)}}{1+W(\eta)}\right].
\]
Direct differentiation of $(1-\eta_j)e^{-W(\eta_j)}$ yields
\[
\frac{\partial B_w}{\partial\varepsilon_j}
=-\pi_j\alpha_j'(\varepsilon_j)Q(\eta_j).
\]
The function $Q$ is strictly decreasing. Indeed, $W$ is strictly increasing, so $e^{-W(\eta)}$ decreases, and $(1-\eta)e^{-W(\eta)}/(1+W(\eta))$ is a product of positive decreasing factors. At the harm-proportional policy, $\eta_1>\eta_2$, so $Q(\eta_1)<Q(\eta_2)$. In addition, $\alpha_1'(1)=2e^{-2}<e^{-1}=\alpha_2'(1)$. Therefore, along $\Delta\varepsilon=(-1,1)$,
\[
\left.\frac{d}{ds}B_w\bigl(\pi,\varepsilon+s\Delta\varepsilon\bigr)\right|_{s=0}
=\frac12\left[\alpha_1'(1)Q(\eta_1)-\alpha_2'(1)Q(\eta_2)\right]<0.
\]
The harm-proportional policy is not a local minimizer. This establishes non-universality, not failure for every heterogeneous profile.

\subsection{Coupled hypergradient and proof of Theorem~\ref{thm:nonproportional}}

Let $x=(\pi,\varepsilon)$, $\eta=\pi\odot\alpha(\varepsilon)$, and $J_\eta:=\partial\eta/\partial x$. At an interior best response with a locally fixed active set, let $J_m:=\partial m^*/\partial x$ and $J_\lambda:=\partial\lambda/\partial x$, and define
\[
H:=\operatorname{diag}\!\left(\eta\odot g''+(1+\lambda)c''\right),
\qquad q:=c'(m^*),
\]
where derivatives of $g_j$ and $c_j$ are evaluated at $m_j^*$. If the hard budget binds, implicit differentiation of the developer's stationarity and budget equations gives
\begin{equation}
\begin{bmatrix}J_m\\J_\lambda\end{bmatrix}
=-
\begin{bmatrix}H&q\\q^\top&0\end{bmatrix}^{-1}
\begin{bmatrix}\operatorname{diag}(g')J_\eta\\0\end{bmatrix}.
\label{eq:coupled-hypergradient}
\end{equation}
If the budget is slack, the final row and column are omitted and $\lambda=0$. The auditor's full hypergradient is
\begin{equation}
\nabla_x B_w
=-J_\eta^\top(w\odot g)
+J_m^\top\bigl(w\odot(1-\eta)\odot g'\bigr).
\label{eq:full-hypergradient}
\end{equation}
The bordered solve in~\eqref{eq:coupled-hypergradient} generally creates cross-dimensional terms: when the budget binds, a perturbation to one component of $x$ changes the common multiplier and hence multiple components of $m^*$. Corollary~\ref{cor:harm-prop} supplies a profile and feasible direction for which the harm-proportional allocation has a strictly negative directional derivative. It therefore proves Theorem~\ref{thm:nonproportional}.

\subsection{Proof of Theorem~\ref{thm:bilevel-reduction}}

The lower-level objective and feasible set are convex, and the objective is strictly convex under Assumption~\ref{ass:cost-mitigation}. For $B>0$, a sufficiently small strictly positive $m$ satisfies $C(m)<B$, so Slater's condition holds. The KKT conditions---stationarity~\eqref{cstr:stationarity}, primal feasibility~\eqref{cstr:primal}, dual feasibility, and both complementary-slackness conditions in~\eqref{cstr:dual}---are therefore necessary and sufficient for $m=m^*(\pi,\varepsilon)$. Substituting them into the auditor's problem yields~\eqref{obj:single-level}, which is globally equivalent to~\eqref{eq:auditor-problem}. The complementarity constraints make the reformulation an MPCC. On an active set satisfying MPEC-LICQ and strict complementarity, fixing that set gives the local smooth system used in~\eqref{eq:coupled-hypergradient} \citep{Luo2011-bi}.

\section{Concrete DP mechanism instantiations}
\label{app:dp-instantiations}

The main text abstracts the differentially private observation channel through the normalized informativeness score $\alpha_j(\varepsilon_j)$. Canonical mechanisms naturally induce a joint test-power function $\widetilde\alpha_j(\varepsilon_j,g_j(h_j,m_j))$. We obtain the reduced form by freezing the residual-harm argument at a reference action $m_j^0$, typically the prior non-strategic baseline, and rescaling power above the test's size to satisfy Assumption~\ref{ass:detectability}. The rescaled $\alpha_j$ is not itself a detection probability. Because equilibrium residual harm differs from that reference by order one in the ablations (median deviation $\approx1.07$; \texttt{scripts/verify\_local\_approx.py}), this is a modeling choice rather than a small-perturbation approximation.

\paragraph{Gaussian mechanism (approximate DP).} For an aggregate statistic $q_j(D)$ with $L_2$-sensitivity $\Delta_j$, the Gaussian mechanism adds noise $\mathcal{N}(0,\sigma_j^2I)$ with $\sigma_j=\Delta_j\sqrt{2\ln(1.25/\delta_j^{\mathrm{priv}})}/\varepsilon_j$ for $(\varepsilon_j,\delta_j^{\mathrm{priv}})$-DP. Under a one-sided $z$-test of $H_0:g_j(h_j,m_j)=0$ versus $H_1:g_j(h_j,m_j)>0$ at level $a$,
\[
\widetilde\alpha_j\bigl(\varepsilon_j,g_j(h_j,m_j)\bigr) \;=\; \Phi\!\left(\frac{\varepsilon_j g_j(h_j,m_j)}{\Delta_j\sqrt{2\ln(1.25/\delta_j^{\mathrm{priv}})}}-z_{1-a}\right).
\]
At the reference action $m_j^0$, the local reduced form $\alpha_j(\varepsilon_j):=(\widetilde\alpha_j(\varepsilon_j,g_j(h_j,m_j^0))-a)/(1-a)$ satisfies Assumption~\ref{ass:detectability}. Under R\'enyi differential privacy, one substitutes the tighter calibration from the corresponding accountant \citep{Mironov2017-df}.

\paragraph{Laplace mechanism (pure DP).} For sensitivity $\Delta_j$, the Laplace mechanism adds $\mathrm{Lap}(\Delta_j/\varepsilon_j)$. For a one-sided threshold test against null $c$ with $g_j(h_j,m_j)>c$,
\[
\widetilde\alpha_j\bigl(\varepsilon_j,g_j(h_j,m_j)\bigr) \;=\; 1-\tfrac{1}{2}\exp\!\bigl(-\varepsilon_j(g_j(h_j,m_j)-c)/\Delta_j\bigr).
\]
At reference action $m_j^0$, the rescaled $\alpha_j(\varepsilon_j):=2\widetilde\alpha_j(\varepsilon_j,g_j(h_j,m_j^0))-1$ satisfies Assumption~\ref{ass:detectability}.

\paragraph{Randomized response (local DP).} Under local DP \citep{Duchi2013-ki}, each individual answers truthfully with probability $e^{\varepsilon_j}/(e^{\varepsilon_j}+1)$. Aggregating $n$ responses, the central limit theorem gives
\[
\widetilde\alpha_j\bigl(\varepsilon_j,g_j(h_j,m_j)\bigr) \;\approx\; \Phi\!\left(\sqrt{n}\,g_j(h_j,m_j)\tanh(\varepsilon_j/2)-z_{1-a}\right),
\]
strictly increasing in both arguments. The same affine rescaling at $m_j^0$ yields a reduced form satisfying Assumption~\ref{ass:detectability}. The exponential family $\alpha_j(\varepsilon_j)=1-\exp(-\kappa_j\varepsilon_j)$ used in \S\ref{sec:experiments} approximates the Gaussian instantiation at moderate $\varepsilon_j$ after calibration to that operating point.

\paragraph{Implications for the bilevel analysis.} The reduced form $\alpha_j(\varepsilon_j)$ makes the developer's first-order condition and the auditor's hypergradient tractable. Theorem~\ref{thm:naive-blindspot} is a result about that reduced-form observability score. Restoring $\widetilde\alpha_j(\varepsilon_j,g_j(h_j,m_j))$ adds a derivative through residual harm to the developer's condition and changes SPAD's quantitative allocation. This full coupling is future work.

\section{Full experiment setup}
\label{app:experiment-setup}

This appendix instantiates the synthetic evaluation protocol of \S\ref{sec:experiments} with all design choices fixed, enabling an independent reproduction.

\paragraph{Simulation environment.} All experiments used Python 3.12 with NumPy and SciPy, using sequential least-squares programming (SLSQP) for the developer's lower-level problem. Hypergradients were computed by forward finite differences with step size $10^{-3}$ through the developer best response. Each configuration uses 50 independent seeds drawn from a fixed sequence. Code and seeds will be released at acceptance.

\paragraph{Parameterisation.} Harm-space dimensionality $d \in \{5, 10, 20\}$. Baseline harm $h$ is drawn either sparsely (component-wise $\mathrm{Beta}(0.5, 2)$) or densely (component-wise $\mathrm{Uniform}[0.5, 1.5]$), with the resulting vector normalized to $\|h\|_1 = d$. Detectability uses the exponential family $\alpha_j(\varepsilon_j) = 1 - \exp(-\kappa_j \varepsilon_j)$ with $\kappa_j \sim \mathrm{Uniform}[0.1, 2.0]$ in heterogeneous regimes and $\kappa_j = 1.0$ in homogeneous controls. Mitigation costs come from a quadratic family $c_j(m) = m^2/2$ and a power-law family $c_j(m) = m^p/p$ with $p \in \{1.5, 3\}$. Developer cost budget $B \in \{0.5d, d, 2d\}$. Total DP budget $\varepsilon_{\mathrm{tot}} \in \{0.1, 0.5, 1, 2, 5\}$, spanning tight to loose regimes. Welfare weights $w$ are drawn either uniformly ($w_j = 1/d$) or non-uniformly ($w \sim \mathrm{Dirichlet}(\alpha = 0.5)$).

\paragraph{Developer types.} \emph{Fully strategic (FS):} numerically solves the constrained lower-level program~\eqref{eq:dev-problem} with SciPy's SLSQP implementation, analytical objective gradients, tolerance $10^{-8}$, and at most 200 iterations. This is the model assumed by SPAD. \emph{Boundedly rational (BR):} runs $K=50$ projected-gradient steps from a uniform initialization $m^{(0)}=(B/d)\mathbf{1}$, with step size $\gamma=0.05$ and Armijo backtracking; this represents a developer with limited optimization capacity. \emph{Non-strategic (NS):} sets $m_j\propto w_jh_j$, ignoring the audit interface.

\paragraph{Auditor baselines.} \emph{Uniform (UNIF):} $\pi_j=1/d$, $\varepsilon_j=\varepsilon_{\mathrm{tot}}/d$. \emph{Harm-proportional (HP):} $\pi_j\propto h_j$, $\varepsilon_j\propto h_j$, normalized. \emph{Welfare-proportional (WP):} analogous with $w_j$. \emph{Uncertainty-focused (UF):} $\varepsilon_j$ proportional to the prior standard deviation of $h_j$ across seeds. \emph{Oracle (ORC):} $\varepsilon_{\mathrm{tot}}=\infty$, used only as an infeasible upper-bound reference. \emph{SPAD:} Algorithm~\ref{alg:spad} with step size $\gamma=0.1\varepsilon_{\mathrm{tot}}$ (decayed by $0.95$ every 20 outer iterations), convergence tolerance $\tau=10^{-4}$, $T_{\max}=200$, and five random restarts (best $B_w$ reported).

\paragraph{Convergence and projection.} The implementation stops when the projected-step residual is below $\tau$ or $T\ge T_{\max}$; it does not use the raw unconstrained gradient norm. Each hypergradient evaluation requires $2d$ forward finite-difference perturbations, each re-solving the developer's lower-level problem. Projection onto $\Delta^d$ uses the standard simplex projection; projection onto $\{\varepsilon\ge0,\sum_j\varepsilon_j\le\varepsilon_{\mathrm{tot}}\}$ clamps to non-negativity and rescales when the budget is violated.

\paragraph{Metrics.} The welfare-weighted observability gap $B_w(\pi, \varepsilon) = \sum_j w_j(1 - \pi_j\alpha_j(\varepsilon_j))g_j(h_j, m^*_j)$ is the auditor's objective, reported in absolute terms and as $B_w/\mathrm{TRH}$; we also track the audit-exposure score $\mathrm{DH}(\pi, \varepsilon, m^*)$ and true residual harm $\mathrm{TRH}(\pi, \varepsilon)$ as diagnostics. SPAD efficiency gain is reported as $(B_{w,\mathrm{UNIF}} - B_{w,\mathrm{SPAD}})/B_{w,\mathrm{UNIF}} \times 100\%$. Developer-type degradation is $\Delta B_w$ when SPAD is optimized assuming FS but the realized developer is BR.

\paragraph{Statistical methodology.} The preliminary ablations report means and one standard deviation across 50 seeds. The supplied aggregation scripts additionally compute a 95\% percentile-bootstrap confidence interval for the mean TRH using 10{,}000 resamples. We do not report hypothesis tests or adjusted $p$-values for this exploratory synthetic comparison.

\paragraph{Ablation design.} Six axes are varied independently with the remaining parameters fixed at the default values $(d = 10,\ \varepsilon_{\mathrm{tot}} = 1,\ \mathrm{FS\ developer},\ \mathrm{quadratic\ cost})$. \emph{A1 — Privacy budget:} $\varepsilon_{\mathrm{tot}} \in \{0.1, 0.5, 1, 2, 5\}$. \emph{A2 — Dimensionality:} $d \in \{5, 10, 20\}$. \emph{A3 — Detectability heterogeneity:} $\kappa$ uniform vs.\ spread. \emph{A4 — Developer type:} FS vs.\ BR vs.\ NS. \emph{A5 — Harm concentration:} sparse vs.\ dense $h$. \emph{A6 — Cost curvature:} $p \in \{1.5, 2, 3\}$.

\paragraph{Compute.} Each configuration runs in less than two minutes on one processor core. The full ablation requires approximately eight core-hours. No graphics processor is required.

\section{Broader Impacts}
\label{sec:broader-impacts}

Differential privacy is established in statistical publication and privacy-preserving data analysis, while its use as the observation layer for AI harm audits remains prospective. This paper studies that prospective design problem rather than documenting an existing deployment practice. Its primary positive contribution is analytical: it shows how an announced privacy-constrained observation interface can change a strategic developer's mitigation incentives. After empirical calibration and institutional validation, the framework may inform how regulators, independent auditors, or compliance teams allocate privacy and follow-up resources across harm dimensions such as demographic subgroups or failure modes.

We identify two potential negative societal impacts that warrant discussion. \emph{First, adversarial use by auditors.} An auditor with strategic interests---for example, an industry self-regulator with commercial ties to the developer---could misuse the framework to design audits that minimize regulatory exposure rather than true harm. The assumption that the welfare weights $w$ reflect genuine social priorities is critical; if $w$ is manipulated, the resulting allocation can be gamed by the auditor rather than by the developer. Possible safeguards include independent scrutiny of $w$ and disclosure of audit-design parameters.

\emph{Second, strategic complexity as a barrier to adoption.} SPAD requires a calibrated model of the developer's cost structure $(c_j, g_j)$ and mitigation budget $B$. Regulators with limited technical capacity may find the framework inaccessible relative to simpler policies. Future work should test whether simplified audit-design rules, such as threshold-based allocation, retain useful performance without requiring full developer-type knowledge.

This study did not involve human subjects, personal data, or systems deployed in production; the synthetic evaluation protocol used only simulated harm spaces.

\section{Code and data availability}
\label{app:artifact}

An anonymous reproducibility artifact is available at:
\begin{center}
\url{https://anonymous.4open.science/r/neurips-2026-strategic-dp-auditing-artifact-3952/}
\end{center}
The artifact includes the \texttt{strategic\_dp} package (model layer, developer best responses, SPAD algorithm), the experiment runner, all figures and tables in this paper, and verification scripts for the numerical claims and reduced-form approximation. The revision-verification script records the counterexample that motivated narrowing Theorem~\ref{thm:naive-blindspot}, checks the corrected direction in Corollary~\ref{cor:harm-prop}, and demonstrates the cross-dimensional response induced by a binding budget. A \texttt{Makefile} provides reproduction targets (\texttt{make help}); the headline-reproduction pipeline runs in approximately 25 minutes on one processor core, whereas the complete six-axis ablation grid described in Appendix~\ref{app:experiment-setup} requires approximately eight core-hours.

\end{document}